\def\BibTeX{{\rm B\kern-.05em{\sc i\kern-.025em b}\kern-.08em
    T\kern-.1667em\lower.7ex\hbox{E}\kern-.125emX}}
\begin{document}

\title{Rule Mining for Correcting Classification Models
% {\footnotesize \textsuperscript{*}Note: Sub-titles are not captured in Xplore and
% should not be used}
% \thanks{Identify applicable funding agency here. If none, delete this.}
}

\author{\IEEEauthorblockN{1\textsuperscript{st} Hirofumi Suzuki}
\IEEEauthorblockA{\textit{Fujitsu Limited} \\
suzuki-hirofumi@fujitsu.com}
\and
\IEEEauthorblockN{2\textsuperscript{nd} Hiroaki Iwashita}
\IEEEauthorblockA{\textit{Fujitsu Limited} \\
iwashita.hiroak@fujitsu.com}
\and
\IEEEauthorblockN{3\textsuperscript{rd} Takuya Takagi}
\IEEEauthorblockA{\textit{Fujitsu Limited} \\
takagi.takuya@fujitsu.com}
\and
\IEEEauthorblockN{4\textsuperscript{th} Yuta Fujishige}
\IEEEauthorblockA{\textit{Fujitsu Limited} \\
fujishige.yuta@fujitsu.com}
\and
\IEEEauthorblockN{5\textsuperscript{th} Satoshi Hara}
\IEEEauthorblockA{\textit{Osaka University} \\
satohara@ar.sanken.osaka-u.ac.jp}
}

\newtheorem{dfn}{Definition}
\newtheorem{lem}{Lemma}
\newtheorem{prop}{Proposition}

\newcommand{\argmax}{\mathop{\rm arg~max}\limits}
\newcommand{\argmin}{\mathop{\rm arg~min}\limits}

% for comment
\newcommand{\TODO}[1]{\textcolor{magenta}{#1}}

\maketitle

\begin{abstract}
Machine learning models need to be continually updated or corrected to ensure that the prediction accuracy remains consistently high.
In this study, we consider scenarios where developers should be careful to change the prediction results by the model correction, such as when the model is part of a complex system or software.
In such scenarios, the developers want to control the specification of the corrections.
To achieve this, the developers need to understand which subpopulations of the inputs get inaccurate predictions by the model.
Therefore, we propose \emph{correction rule mining} to acquire a comprehensive list of rules that describe inaccurate subpopulations and how to correct them.
We also develop an efficient correction rule mining algorithm that is a combination of frequent itemset mining and a unique pruning technique for correction rules.
We observed that the proposed algorithm found various rules which help to collect data insufficiently learned, directly correct model outputs, and analyze concept drift.
\end{abstract}

\begin{IEEEkeywords}
Rule Mining, Itemset Mining, Classification
\end{IEEEkeywords}

%%%%%%%%%%%%%%%%%%%%%%%%%%%%%%
% Introduction
%%%%%%%%%%%%%%%%%%%%%%%%%%%%%%
\section{Introduction}

%%%%%% モダンな背景
Machine learning models are now ubiquitous in society, including attempts on critical tasks such as medical diagnosis~\cite{caruana2015intelligible}, credit scoring~\cite{siddiqi2012credit}, and predictive justice~\cite{kleinberg2018human}. 
In cases of using prediction models for such critical tasks, the models must be continually updated or \emph{corrected} to ensure that the prediction accuracy remains consistently high.
A standard model correction scenario is that there is a prediction model now in operation, and over time, accuracy degradation will be detected on newly collected data, then developers should make efforts to improve model performances based on the new data.
In fact, it is widely known that the model performance gets worse as concept drift occurs over time~\cite{gama2014survey}.
In addition, there are fully automatic ways for adapting models to new data, i.e., domain adaptation techniques~\cite{domain-adaptation-survey}.

%%%%%% 問題提起
In this study, we consider more critical model correction scenarios such that some consistencies are required in models before and after correction.
For example, when a model is part of a complex system or software and affects several processes, developers should carefully change the prediction results by model correction.
In such a scenario, it is desirable that developers can control what predictions are corrected instead of using fully automatic correction.
Then, developers need to understand the target correction regions and how to correct them.
In general, because the correction is to reduce misclassifications, our target regions can be represented by inaccurately predicted subpopulations.
Therefore, we aim to acquire knowledge of inaccurate subpopulations as a comprehensive list of rules describing the characteristics of subpopulations and how to correct them.

% 概念図
\begin{figure}[t]
\centering
\includegraphics[scale=0.65]{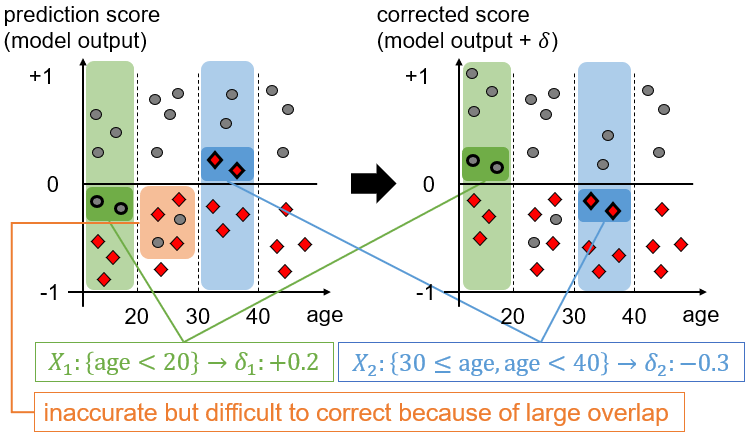}
\caption{
The concept of correction rules.
The left figure represents the prediction scores for instances of positive (black circle) and negative (red diamond) classes on the current model with some misclassifications.
The right figure represents the corrected prediction scores after applying the two correction rules $X_1 \to \delta_1$ and $X_2 \to \delta_2$ at the bottom of the figure.
With these correction rules, the misclassified instances of ``age$<$20'' and ``30$\leq$age$<$40'' get correct predictions by prediction scores with $\delta_1$ and $\delta_2$ are added, respectively.
Although there are some misclassifications with the ages between 20 and 30, we consider that it is difficult to correct because the scores of both classes highly overlap.
}
\label{fig:concept}
\vspace{-4mm}
\end{figure}

% 提案の概要①（Correction Rule）
We particularly focus on correcting binary classification models, and we define {\it correction rules} that describe inaccurate subpopulations and how to correct them.
Figure~\ref{fig:concept} illustrates the concept of correction rules.
Correction rules attempt to change prediction scores output by a given base model.
The sign of an output score represents the predicted class for an input instance. 
For example, if the score is $s = 0.7$ or $s = -0.3$, the instance is predicted as positive or negative, respectively.
A correction rule is denoted by $X \rightarrow \delta$ where $X$ is a condition specifying subpopulation and $\delta$ is an optimized correction amount of model outputs.
More formally, $X$ is denoted by a set of discretized attributes such as ``age $<$ 30'', ``age $\geq$ 20'', and ``occupation = teacher''.
And, $\delta$ changes the score $s$ into $s + \delta$ for subpopulation fall within $X$.

% 提案の概要②（Correction Rule Mining）
Using the definition of correction rules, we propose a new mining problem called {\it correction rule mining}.
If $\delta$ greatly improves the classification accuracy on the subpopulation fall within $X$, we consider that the correction rule $X \rightarrow \delta$ is effective.
Even if a condition $X$ captures many misclassified instances, the rule is ineffective if no $\delta$ can improve accuracy.
Our correction rule mining aims to find only effective correction rules.
Therefore, we quantify the effectiveness of rules by introducing metrics \emph{support} and \emph{confidence} in correction rule mining: the frequency of misclassified but corrected instances and the accuracy of the correction on captured subpopulation.
We then formalize the correction rule mining as the problem of enumerating all the correction rules that exceed the thresholds.
We show that the problem can be solved by combining {\it frequent itemset mining}~\cite{Agrawal:apriori} and a unique pruning technique using partial monotonicity on the confidence metric.

% 貢献のまとめ
\subsubsection*{Our Contributions}
We introduce a new problem, correction rule mining, of enumerating all pairs of an inaccurate condition and an optimized correction amount.
We then propose the first efficient mining algorithm for this problem. 
With the proposed algorithm, we can automatically obtain a comprehensive list of rules describing inaccurate subpopulations and how to correct them, which helps the developer's decision to improve training data, directly correct model outputs, and analyze concept drift.
In summary, our contributions are as follows:
\begin{enumerate}
    % formulation
    \item We define the concept of correction rules that help to improve model performances by focusing on inaccurate subpopulations, and formulate the task of enumerating correction rules for the first time.
    % algorithmic contribution
    \item We propose an efficient correction rule mining algorithm that combines frequent itemset mining and a unique pruning technique for correction rules.
    % model agnostic
    \item The proposed algorithm is model agnostic, i.e., it can be applied to any binary classification model with a function to output prediction scores.
\end{enumerate}
Our experimental results confirm the validity of our enumeration approach compared to the existing rule-based model correction approaches in terms of accuracy improvements and concept drift summarization.

%%%%%%%%%%%%%%%%%%%%%%%%%%%%%%
% Related Work
%%%%%%%%%%%%%%%%%%%%%%%%%%%%%%
\section{Related Work}

%%%%
\subsection{Rule Mining}
There are many studies on rule mining~\cite{Agrawal:apriori,Dong_KDD1999,Loekito_KDD2006,Novak_JMLR2009,SubgroupDiscovery} which attempt to enumerate conditions for subpopulations with high proportions of instances having a specified label.
Therefore, we can use them to obtain rules related to misclassified subpopulations by labeling misclassified or not.
However, we can not guarantee that the rules are suitable for correcting classification models because there are no validations of rules that are accurate for model correction.
In contrast, our correction rule mining can obtain rules for properly correcting classification models by considering correction amounts and the confidence of the correction.

There is another approach, called PREMISE~\cite{PREMISE}, of finding rules for misclassified subpopulations.
PREMISE penalizes rules by the minimum description length (MDL) principle and outputs a few rules as a summarization of misclassifications.
Therefore, PREMISE does not fit our purpose of finding a comprehensive list of rules (i.e., enumeration) without overlooking valuable conditions of misclassifications.

%%%%
\subsection{Tree-Based Correction}
In machine learning, decision tree structures are used as popular models of rule mining.
LCT~\cite{LCT} is a decision tree to directly correct classification models by stacking it on the base models and optimizing linear correction of prediction scores.
Traversing LCT from root to leaf, we can obtain rules when the model tends to misclassify and how it can be corrected.
However, because LCT relies on the tree structure, the rules become mutually exclusive while sharing the rules in common ancestor nodes.
These restrictions limit the diversity of the rules we can find.
Therefore, LCT cannot find a comprehensive list of correction rules, and it does not fit our purpose.

%%%
\subsection{Patching Framework}
Patching \cite{Patching} is also a related study of model corrections.
The literature proposes a framework for correcting classification models by detecting misclassification regions and constructing patch models to re-predict instances that fall into the misclassification regions. 
However, the literature had not discussed deeper how to detect misclassification regions, and constructing a patch model would be more effective.
In fact, the experiments in the literature had tried only a tree-based approach for misclassification detection and patch models.
Therefore, we can say that LCT and our correction rules give the patching framework another concrete approach:
detecting misclassification regions by if-then rules and constructing a patch model of only linear or scalar calculations.

%%%%%%%%%%%%%%%%%%%%%%%%%%%%%%
% Problem Formulation
%%%%%%%%%%%%%%%%%%%%%%%%%%%%%%
\section{Problem Formulation}

\subsection{Itemsets and Datasets}
Let $I = \{x_1, x_2, \ldots, x_n\}$ be a set of $n$ items.
% We assume that each item is written as the form ``$x_i = \mathrm{attr}_i ~ \circ_i ~ \mathrm{value}_i$'' where $\mathrm{attr}_i$ indicates an attribute, $\mathrm{value}_i$ is a value of the attribute, and $\circ_i \in \{<, \geq, =, \neq, \in, \notin\}$ is a relational operator between the attribute and the value.
We assume that each item is written as the form ``$x_i = \mathrm{attr}_i ~ \circ_i ~ \mathrm{value}_i$'' where $\mathrm{attr}_i$ indicates an attribute, $\mathrm{value}_i$ is a value of the attribute, and $\circ_i \in \{<, \geq, =, \neq\}$ is a relational operator between the attribute and the value.
For example, ``age $<$ 30'', ``age $\geq$ 25'', and ``occupation = teacher'' are items.
An itemset is a non-empty subset of items $X \subseteq I$, for example \{``age $<$ 30'', ``age $\geq$ 25'', ``occupation = teacher''\} that represents a logical expression ``25 $\leq$ age $<$ 30 $\land$ occupation = teacher''.

We implicitly handle machine learning models for binary classification tasks.
Let $1$ and $-1$ be the class labels of the positive and negative classes, respectively.
Now, we have a classification model of interest in our problem, which we call the base model.
We assume that the base model can output a prediction score $s \in (-1, 1)$ for each prediction\footnote{If the score is defined on $(-\infty, \infty)$, we can transform them to $(-1, 1)$ by using $\tanh$.} where the predicted class label is defined by
\begin{equation*}
\ell(s) :=
\begin{cases}
    1 & s > 0,\\
    -1 & s \leq 0.
\end{cases}
\end{equation*}
This setting is standard in machine learning.

In our problem, we are given a dataset as a multiset $D \subseteq 2^I \times (-1,1) \times \{-1,1\}$.
Each instance $(t, s, c) \in D$ is a tuple of an itemset $t \subseteq I$ associated with a prediction target, a prediction score $s$ of the base model, and a true class label $c$.
We can divide the dataset $D$ into four disjoint datasets according to the prediction results: true positive dataset $D_+^\top := \{(t, s, c) \in D \mid \ell(s) = c = 1\}$, false positive dataset $D_+^\bot := \{(t, s, c) \in D \mid \ell(s) = 1, c = -1\}$, true negative dataset $D_-^\top := \{(t, s, c) \in D \mid \ell(s) = c = -1\}$, and false negative dataset $D_-^\bot := \{(t, s, c) \in D \mid \ell(s) = -1, c = 1\}$.

\subsection{Correction Rules}
We consider correcting prediction results of the base model by {\it correction rules} each of which is defined as $X \rightarrow \delta$ where $X \subseteq I$ is an itemset and $\delta \in [-1, 0) \cup (0, 1]$ is a correction amount.
Given an instance of an itemset $t \subseteq I$ and a prediction score $s \in (-1,1)$, a correction rule $X \rightarrow \delta$ means that the prediction score is changed from $s$ to $s + \delta$ if the condition $X \subseteq t$ holds.
If $\delta > 0$ holds, then we say that the correction is in the positive direction; otherwise, the negative direction.

We define some metrics for estimating the quality of correction rules.
While traditional rule mining considers metrics focusing on all the hit instances, our metrics focus on the hit ones with their predictions changed.
Let $R$ be the short form of the correction rule $X \rightarrow \delta$.
We define the sets of truly changed and falsely changed instances $C_R^\top$ and $C_R^\bot$ as follows:
\begin{align*}
    C_R^\top := \{(t,s,c) \in D_+^\bot \cup D_-^\bot \mid X \subseteq t, \ell(s + \delta) = c\}
    % \begin{cases}
    %     \{(t, s, c) \in D_+^\bot \mid X \subseteq t, \ell(s + \delta) = c\} & \delta < 0,\\
    %     \{(t, s, c) \in D_-^\bot \mid X \subseteq t, \ell(s + \delta) = c\} & \delta > 0.
    % \end{cases}
    \\
    C_R^\bot := \{(t,s,c) \in D_+^\top \cup D_-^\top \mid X \subseteq t, \ell(s + \delta) \neq c\}
    % \begin{cases}
    %     \{(t, s, c) \in D_+^\top \mid X \subseteq t, \ell(s + \delta) \neq c\} & \delta < 0,\\
    %     \{(t, s, c) \in D_-^\top \mid X \subseteq t, \ell(s + \delta) \neq c\} & \delta > 0.
    % \end{cases}
\end{align*}
From the above definition, we also define the {\it support} and {\it confidence} in terms of correction as follows:
\begin{align*}
    \mathrm{supp}(R) := \frac{|C_R^\top|}{n^\bot},\quad \mathrm{conf}(R) := \frac{|C_R^\top|}{|C_R^\top| + |C_R^\bot|}
\end{align*}
where $n^\bot = |D^\bot_-|$ if $\delta > 0$, otherwise $n^\bot = |D^\bot_+|$.
They indicate the frequency of the truly changed predictions and the accuracy of the prediction changes, respectively.
Although the correction amount $\delta$ does not become zero by the definition of the correction rule, we denote $\mathrm{supp}(X \rightarrow 0) = \mathrm{conf}(X \rightarrow 0) = 0$ for convenience.

\subsection{Optimizing Correction Amount}
\label{subsec:optimize-correction-amount}
We define an optimized correction amount for any pair of an itemset and a direction.
Let $\theta$ be the given support threshold.
We optimize correction amounts for an itemset $X$ in each direction as follows:
\begin{align*}
    \delta^*_+(X) \in \argmax_{\delta \in (0,1]} \{\mathrm{conf}(X \rightarrow \delta) \mid \mathrm{supp}(X \rightarrow \delta) \geq \theta\}\\
    \delta^*_-(X) \in \argmax_{\delta \in [-1,0)} \{\mathrm{conf}(X \rightarrow \delta) \mid \mathrm{supp}(X \rightarrow \delta) \geq \theta\}
\end{align*}
If such correction amounts do not exist, we set $\delta^*_+(X) = 0$ or $\delta^*_-(X) = 0$ for convenience.
In practice, we can restrict the candidates of correction amount $\delta$ based on the given dataset.
% On the positive direction, we consider the decreasing sequence $p_1 > p_2 > \ldots > p_k > p_{k + 1} = -1$ of distinct prediction scores on $(t, s, c) \in D^\top_- \cup D^\bot_-$ where $X \subseteq t$.
% We can then find $\delta^*_+(X)$ in $\{-\frac{p_i + p_{i+1}}{2} \mid 1 \leq i \leq k\}$.
% On the negative direction, we consider the increasing sequence $q_1 < q_2 < \ldots < q_k < q_{k + 1} = 1$ of distinct prediction scores on $(t, s, c) \in D^\top_+ \cup D^\bot_+$ where $X \subseteq t$.
% We can then find $\delta^*_-(X)$ in $\{-\frac{q_i + q_{i+1}}{2} \mid 1 \leq i \leq k\}$.
Let us consider the sequence $-1 = p_0 < p_1 < \ldots < p_k < p_{k+1} = 1$ of distinct prediction scores on $(t, s, c) \in D$.
We can then find $\delta^*_+(X)$ and $\delta^*_-(X)$ in $\{-\frac{p_i + p_{i+1}}{2} \mid 0 \leq i \leq k\}$.

\subsection{Correction Rule Mining}
We now formulate {\it correction rule mining} using the notations above.
We consider that a correction rule is better if it has enough support and confidence.
Moreover, on the aspect of readability, it is better to limit the itemset length.
Concluding such requirements as the hard constraints, the formulation of correction rule mining is as follows:
\begin{dfn}[Correction Rule Mining]
Given a set $I$ of items, a dataset $D$, a support threshold $\theta \in [0,1]$, a confidence threshold $\lambda \in [0,1]$, and a maximum itemset length $K \in \mathbb{N}$, we say a correction rule $X \rightarrow \delta$ is acceptable if it satisfies all the following conditions:
(i) $|X| \leq K$, (ii) $\mathrm{supp}(X \rightarrow \delta) \geq \theta$, (iii) $\mathrm{conf}(X \rightarrow \delta) \geq \lambda$, and (iv) $\delta = \delta^*_+(X)$ or $\delta = \delta^*_-(X)$.
Correction rule mining is to enumerate all the acceptable correction rules.
\end{dfn}
%

%%%%%%%%%%%%%%%%%%%%%%%%%%%%%%
% Algorithms
%%%%%%%%%%%%%%%%%%%%%%%%%%%%%%
\section{Algorithms}
We propose an algorithm for correction rule mining using a backtrack search.
However, a naive implementation of the backtrack search can be prohibitively slow because the number of correction rule candidates can be exponentially huge.
Therefore, we need some acceleration techniques using properties of correction rules.

First, we show a relationship between correction rule mining and {\it frequent itemset mining}, and propose to use an algorithm for frequent itemset mining as a subroutine of our algorithm.
Second, we show a monotonicity of confidence in partial search spaces and propose an efficient pruning strategy unique to correction rule mining.
Third, we propose an algorithmic option to enumerate only {\it minimal} correction rules to reduce the output size.
In addition, we describe post-processing to denoise enumerated correction rules for practical use.

The proposed algorithm runs on each pair of datasets $(D^\top_+, D^\bot_+)$ and $(D^\top_-, D^\bot_-)$, i.e., for each direction of correction.
In the following, we denote the input datasets of the algorithm by $D^\top$ and $D^\bot$ without loss of generality.

\subsection{Relationship with Frequent Itemsets}
We can find that a set of acceptable correction rules are derived from a subset of frequent itemsets according to a simple observation.
Given an itemset $X \subseteq I$ and a dataset $D$, let $D(X) := \{(t,s,c) \in D \mid X \subseteq t\}$ be the set of hit instances by $X$.
The frequent itemsets are defined as follows:
\begin{dfn}[Frequent Itemset]
Given a threshold $\theta' \in [0,1]$, an itemset $X \subseteq I$ is a frequent itemset if $\frac{|D(X)|}{|D|} \geq \theta'$ holds.
\end{dfn}
\noindent
We obviously have the equation $\mathrm{supp}(X \rightarrow \delta) \leq \frac{|D^\bot(X)|}{|D^\bot|}$ for any $\delta$.
Therefore, an itemset $X$ has a chance to become a correction rule if and only if $X$ is a frequent itemset on $D^\bot$ with the threshold $\theta' = \theta$.
In addition, we focus on a lattice structure of frequent itemsets defined as follows:
\begin{dfn}[Equivalent Lattice]
% For any frequent itemset $X \subseteq I$, there is a subset $S \subseteq I$ satisfying $X \cap S = \emptyset$, $D(X) = D(X \cup S)$, and $X \cup S$ is a closed itemset.
% We call $L(X,S) := \{Y \subseteq I \mid X \subseteq Y \subseteq X \cup S\}$ equivalenet lattice.
For any frequent itemsets $X, S \subseteq I$ satisfying $X \cap S = \emptyset$ and $D(X) = D(X \cup S)$, we call $L(X,S) := \{Y \subseteq I \mid X \subseteq Y \subseteq X \cup S\}$ equivalent lattice.
For any $Y \in L(X,S)$, we have $D(X) = D(Y)$.
\end{dfn}
%
%%% 文献中の LCMfreq というアルゴリズムであると言う
Using equivalent lattices, we can enumerate frequent itemsets as a compressed form so that each equivalent lattice represents multiple frequent itemsets.
In fact, the algorithm named LCMfreq in the literature~\cite{LCM} realizes the idea above.
LCMfreq outputs frequent itemsets by enumerating a set of equivalent lattices $\mathcal{L}$.
Here, $L_1 \cap L_2 = \emptyset$ for any two distinct lattices $L_1, L_2 \in \mathcal{L}$ and $\bigcup_{L \in \mathcal{L}} L$ become the set of all the frequent itemsets.
Therefore, we can complete correction rule mining by scanning each lattice $L(X,S) \in \mathcal{L}$ of $|X| \leq K$ obtained by LCMfreq on the dataset $D^\bot$ and the threshold $\theta$.

\subsection{Monotonicity on Equivalent Lattice}
We start to scan an equivalent lattice $L(X,S)$ by a backtracking strategy starting with $X$.
We recursively generate new itemsets until the length limit is met, optimize their correction amounts, and check whether each correction rule is acceptable.
In each recursion, we take an item $x_i \in S$ whose index is greater than the last item taken to avoid duplicated searches.
However, if $L(X,S)$ has no correction rules, an exponential loss can occur in the computation time with respect to $|S|$.

We solve the issue by monotonicity of confidence values on equivalent lattices described by the following lemma.
\begin{lem}
On any equivalent lattice $L(X,S)$, for any pair of itemsets $Y,Z \in L(X,S)$, if $Y \subseteq Z$ holds, then $\mathrm{conf}(Y \rightarrow \delta^*(Y)) \leq \mathrm{conf}(Z \rightarrow \delta^*(Z))$ holds where $\delta^*(Y)$ and $\delta^*(Z)$ are optimized correction amounts.
\end{lem}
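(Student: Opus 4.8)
The plan is to reduce the entire statement to the defining property of an equivalent lattice, namely that every itemset in $L(X,S)$ hits exactly the same instances. First I would invoke the equivalent lattice definition to record that $D(Y) = D(X) = D(Z)$ for all $Y, Z \in L(X,S)$. This uses the closure identity $D(X) = D(X \cup S)$ together with $X \subseteq Y \subseteq X \cup S$ and the obvious monotonicity $D(X) \supseteq D(Y) \supseteq D(X \cup S)$, which squeezes all the hit sets in the lattice to be equal.

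The second step is to observe that $C_R^\top$ and $C_R^\bot$, and hence $\mathrm{supp}(R)$ and $\mathrm{conf}(R)$, depend on the condition of a correction rule only through its hit set. Indeed, in the definitions of $C_R^\top$ and $C_R^\bot$ the itemset enters solely via the predicate ``$X \subseteq t$'', which selects precisely the instances of $D(X)$; the corrected label $\ell(s+\delta)$ and the comparison with $c$ involve only $s$, $c$, and $\delta$. Therefore, fixing any correction amount $\delta$, I would establish the set equalities $C_{Y \rightarrow \delta}^\top = C_{Z \rightarrow \delta}^\top$ and $C_{Y \rightarrow \delta}^\bot = C_{Z \rightarrow \delta}^\bot$, which immediately give $\mathrm{supp}(Y \rightarrow \delta) = \mathrm{supp}(Z \rightarrow \delta)$ and $\mathrm{conf}(Y \rightarrow \delta) = \mathrm{conf}(Z \rightarrow \delta)$ as functions of $\delta$.

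With the support and confidence of $Y$ and $Z$ coinciding pointwise in $\delta$, the two optimization problems defining $\delta^*(Y)$ and $\delta^*(Z)$ share an identical feasible region $\{\delta : \mathrm{supp} \geq \theta\}$ and an identical objective, so their optimal values agree and $\mathrm{conf}(Y \rightarrow \delta^*(Y)) = \mathrm{conf}(Z \rightarrow \delta^*(Z))$; the asserted inequality follows a fortiori. Note that the hypothesis $Y \subseteq Z$ is not actually needed to obtain this equality, but it is the form in which the bound is consumed by the backtracking search, where $Z$ arises from $Y$ by adjoining items of $S$.

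The one point requiring care — and the closest thing to an obstacle, though not a real difficulty — is bookkeeping around the optimization rather than any substantive argument. I must ensure $\delta^*$ is taken in a single fixed direction (the run-specific $\delta^*_+$ or $\delta^*_-$) so that the two maximization problems are literally the same, and I must handle the possibly non-unique maximizer by comparing optimal \emph{values} of $\mathrm{conf}$ rather than the chosen maximizers. Beyond this, the claim is a direct consequence of the equivalent lattice property, with no case analysis on $\delta$ or on the sign of the correction.
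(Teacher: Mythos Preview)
Your argument rests on the claim that the equivalent lattice forces $D(Y)=D(Z)$ on the \emph{full} dataset, from which you deduce $C^\top_{Y\to\delta}=C^\top_{Z\to\delta}$ and $C^\bot_{Y\to\delta}=C^\bot_{Z\to\delta}$ simultaneously. That is the gap. In this paper the lattices $\mathcal{L}$ are produced by running LCMfreq on $D^\bot$, so the lattice identity only guarantees $D^\bot(Y)=D^\bot(Z)$; it says nothing about $D^\top$. Consequently you do get $C^\top_{Y\to\delta}=C^\top_{Z\to\delta}$ (these sets are drawn from $D^\bot_+\cup D^\bot_-$), but the falsely-changed sets $C^\bot_{Y\to\delta}$ and $C^\bot_{Z\to\delta}$, which live in $D^\top_+\cup D^\top_-$, need not coincide. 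Your conclusion that confidence is \emph{equal} across the lattice, and that $Y\subseteq Z$ is superfluous, is therefore unsupported in the setting where the lemma is applied.

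The paper's proof uses exactly the missing ingredient: from $Y\subseteq Z$ one has the anti-monotone containment $D^\top(Y)\supseteq D^\top(Z)$, hence $C^\bot_{Y\to\delta}\supseteq C^\bot_{Z\to\delta}$, while the lattice equality on $D^\bot$ gives $C^\top_{Y\to\delta}=C^\top_{Z\to\delta}$. Together these yield $\mathrm{conf}(Y\to\delta)\le\mathrm{conf}(Z\to\delta)$ for every $\delta$, and then $\mathrm{conf}(Y\to\delta^*(Y))\le\mathrm{conf}(Z\to\delta^*(Y))\le\mathrm{conf}(Z\to\delta^*(Z))$, the last step using that $\delta^*(Y)$ is feasible for $Z$ because the supports agree. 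So the hypothesis $Y\subseteq Z$ is genuinely load-bearing, and the monotonicity is a true inequality, not an equality as you argue.
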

\begin{proof}
If $Y \subseteq Z$ holds, we have $D^\top(Y) \supseteq D^\top(Z)$, and it indicates that $C^\bot_{Y \rightarrow \delta} \supseteq C^\bot_{Z \rightarrow \delta}$ holds for any $\delta$.
From the property of the equivalent lattice, we have $D^\bot(Y) = D^\bot(Z)$, and it indicates that $C^\top_{Y \rightarrow \delta} = C^\top_{Z \rightarrow \delta}$ holds for any $\delta$.
Hence, by the definition of the confidence, we have $\mathrm{conf}(Y \rightarrow \delta) \leq \mathrm{conf}(Z \rightarrow \delta)$ for any $\delta$.
Therefore, we have $\mathrm{conf}(Y \rightarrow \delta^*(Y)) \leq \mathrm{conf}(Z \rightarrow \delta^*(Y)) \leq \mathrm{conf}(Z \rightarrow \delta^*(Z))$.
\end{proof}
\noindent
Given an equivalent lattice $L(X,S)$, when $\mathrm{conf}(X \cup S \rightarrow \delta^*(X \cup S)) < \lambda$ holds, we can skip scanning the lattice because the lemma above indicates any itemset $Y \in L(X,S)$ satisfies $\mathrm{conf}(Y \rightarrow \delta^*(Y)) < \lambda$.
This is a reasonable pruning strategy for each lattice unit that requires only one optimization of the correction amount.

\subsection{Minimal Enumeration}
Output sizes of rule mining tend to be large in general.
Therefore, we propose an option to reduce output sizes of correction rule mining by defining the minimality as follows:
\begin{dfn}[Minimal Correction Rule]
An acceptable correction rule $Y \rightarrow \delta^*(Y)$ is {\it minimal} if there are no acceptable correction rules such that $Z \rightarrow \delta^*(Z)$ where $Y \supset Z$ and $D^\bot(Y) = D^\bot(Z)$.
\end{dfn}
\noindent
Namely, if many rules hit the same false instances, minimality focuses on the rules with minimal itemsets.

We can simply reduce the number of output correction rules by enumerating only minimal ones.
Moreover, we can make the backtrack strategy on equivalent lattices more efficient: if a current itemset becomes an acceptable correction rule, we will immediately backtrack because the after recursion makes only non-minimal correction rules.
After the backtrack search on all the equivalent lattices, we eliminate non-minimal correction rules not detected in the backtrack search.

\subsection{Implementation}
We conclude the pseudo code of the proposed algorithm named CRMiner in Algorithm \ref{alg:crm}.
First, the algorithm uses LCM on $D^\bot$ and threshold $\theta$ to enumerate equivalent lattices $\mathcal{L}$ where each $L(X,S) \in \mathcal{L}$ satisfies $|X| \leq K$.
Second, the algorithm checks the pruning condition of each lattice and scans each non-pruned lattice by the backtrack search.
For the details of LCM, please refer to the literature~\cite{LCM}.
In particular, we follow LCM in the data structure of storing datasets.

As a preprocessing of CRMiner, we recommend reordering items $I = \{x_1, x_2, \ldots, x_n\}$ by ascending order of the number of instances including them.
According to the LCM literature, item reordering reduces the computation time of managing the instance set hitting a current itemset in a backtrack search.
This is useful for not only LCM but also scanning lattices.

We summarize the computational complexity of CRMiner.
The computation time of managing hit instances is the same as in the LCM literature.
We focus our analysis on optimizing correction amounts and extracting minimal rules.
The process of optimizing correction amounts is shown in Section \ref{subsec:optimize-correction-amount}.
It is done in the linear time of the number of hit instances, previously sorting the instances by prediction scores.
For extracting minimal rules, we construct process units $\mathcal{R}_{\mathrm{key}} := \{X \rightarrow \delta \in \mathcal{R} \mid D^\bot(X) = \mathrm{key}\}$ by hashing while scanning lattices.
This hashing process does not affect the computational complexity of the scanning process.
After that, we extract minimal ones for each unit.
It takes $O(K \cdot |\mathcal{R}_{\mathrm{key}}|^2)$ because it is done by comparing all pairs of itemsets.

\begin{algorithm}[t]
\caption{CRMiner}
\label{alg:crm}
\textbf{Input}: Itemset $I$, datasets $D^\top$ and $D^\bot$, length limit $K$, support threshold $\theta$, and confidence threshold $\lambda$\\
\textbf{Output}: All the acceptable correction rules.
\begin{algorithmic}[1]
\Function{SCAN-LATTICE}{$X, S, i$}
    \State{$\mathcal{R} \leftarrow \emptyset$}
    \If{$X \neq \emptyset$ and $X \rightarrow \delta^*(X)$ is acceptable}
        \State{$\mathcal{R} \leftarrow \mathcal{R} \cup \{X \rightarrow \delta^*(X)\}$}
        \If{Using minimal enumeration}
            \Return $\mathcal{R}$
        \EndIf
    \EndIf
    \If{$|X| = K$}
        \Return $\mathcal{R}$
    \EndIf
    \For{$x_j \in S$ where $j > i$}
        \State{$\mathcal{R} \leftarrow \mathcal{R} \cup \text{SCAN-LATTICE}(X \cup \{x_j\}, S, j)$}
    \EndFor
    \Return $\mathcal{R}$
\EndFunction
\Procedure{CRMiner}{}
    \State{Compute equivalent lattices $\mathcal{L}$ by LCM}
    \State{$\mathcal{R} \leftarrow \emptyset$}
    \For{$L(X,S) \in \mathcal{L}$}
        \If{$\mathrm{conf}(X \cup S \rightarrow \delta^*(X \cup S)) \geq \lambda$}
            \State{$\mathcal{R} \leftarrow \mathcal{R} \cup \text{SCAN-LATTICE}(X, S, 0)$}
        \EndIf
    \EndFor
    \If{Using minimal enumeration}
        \State{Eliminate non-minimal ones on $\mathcal{R}$}
    \EndIf
    \State{Output $\mathcal{R}$}
\EndProcedure
\end{algorithmic}
\end{algorithm}
\subsection{Simple Denoising}
In general, because datasets may have limited size and some biases, enumerated correction rules will contain noise rules with low generalization performances.
Therefore, we introduce a simple denoising method.
We split the given dataset into the {\it mining} and {\it validation} datasets.
We obtain the correction rules by CRMiner running on the mining dataset.
After that, we extract only efficient correction rules according to the validation dataset.
If a correction rule has confidence of less than a given threshold $\lambda_{\mathrm{valid}} \in [0,1]$ on the validation dataset, we regard the correction rule as not efficient and eliminate it.

%%%%%%%%%%%%%%%%%%%%%%%%%%%%%%
% Applications
%%%%%%%%%%%%%%%%%%%%%%%%%%%%%%
\section{Applications}

\subsection{Constructing Correction Models}
Given a set of correction rules $\mathcal{R}$, we consider how to directly correct base models.
We define rule list and rule set models as examples of {\it correction models}.
They differ in a way whether the predictions are changed by at most one rule or the average of multiple rules.
\begin{dfn}[Correction Rule List (CRL)]
CRL is an ordered list $(X_1 \rightarrow \delta_1, X_2 \rightarrow \delta_2, \ldots, X_m \rightarrow \delta_m) \in \mathcal{R}^m$.
Given an instance of an itemset $t \subseteq I$ and a prediction score $s \in (-1,1)$, CRL finds the minimum index $i$ of $X_i \subseteq t$.
If such $i$ exists, CRL outputs $s + \delta_i$ as the changed prediction score, otherwise outputs the original $s$.
\end{dfn}
\begin{dfn}[Correction Rule Set (CRS)]
CRS is a subset $\mathcal{S} \subseteq \mathcal{R}$.
Given an instance of an itemset $t \subseteq I$ and a prediction score $s \in (-1,1)$, CRS finds the subset $\mathcal{T} := \{X \rightarrow \delta \in \mathcal{S} \mid X \subseteq t\}$.
If $\mathcal{T} \neq \emptyset$ holds, CRS outputs $s + \frac{1}{|\mathcal{T}|}\sum_{X \rightarrow \delta \in \mathcal{T}}\delta$ as the changed prediction score, otherwise outputs original $s$.
\end{dfn}

A greedy algorithm easily constructs these correction models.
First, we determine the model size limit $M$ if necessary.
Second, we select an objective function such as accuracy, f1 score, log loss, and so on.
After that, we greedily take a correction rule in $\mathcal{R}$ improving the objective value until the size limit is met or the objective value can not be improved.

\subsection{Improving Training Dataset}
We also consider an indirect use case of correction rules.
For making base models more accurate, a general solution is to retrain models with improved training datasets by adding more instances or weighting instances.
% Because the concept of the correction rule is a simple prediction change according to an if-then rule and a scalar calculation, we can say that correction rules make guesses of inaccurate subpopulations which are probably easy to correct.
Because the concept of the correction rule is a simple prediction change according to an if-then rule and a scalar calculation, we can say that correction rules guess inaccurate subpopulations of probably easy to correct.
Therefore, augmenting instances like those captured by mined correction rules is a good strategy for improving training datasets.

\subsection{Guessing Concept Drift Regions}
Machine learning models often misclassify instances that have not been learned well or generated on unknown distributions (i.e., concept drift).
The two applications described above can simultaneously adapt base models to misclassification cases.
However, some scenarios aim to handle only concept drift.
Therefore, we introduce a method to extract correction rules probably related to concept drift.

We assume that there are two datasets where one is used to train the base model (old dataset) and the other is obtained on a new distribution (new dataset).
For each correction rule mined on the new dataset, if it does NOT adapt to the old dataset, we consider that the correction rule guesses a concept drift region.
% Hence, we extract correction rules mined on the new dataset which have confidence values of less than a given threshold $\lambda_{\mathrm{drift}} \in [0,1]$ on the old dataset.
Hence, we adopt a correction rule mined on the new dataset if it has confidence values of less than a given threshold $\lambda_{\mathrm{drift}} \in [0,1]$ on the old dataset.

%%%%%%%%%%%%%%%%%%%%%%%%%%%%%%
% Experiments
%%%%%%%%%%%%%%%%%%%%%%%%%%%%%%
\section{Experiments}
We demonstrate the power of enumerating correction rules compared with other rule-based model corrections.
The experiments were conducted from the following points of view.
\begin{itemize}
    \item Can we efficiently improve model accuracy by directly adapting correction rules?
    \item Can we collect additional training data according to correction rules to effectively improve model accuracy?
    \item Can we capture various concept drift regions by enumerating correction rules?
\end{itemize}
Therefore, we handled two model correction scenarios of {\it data lacking} and {\it concept drift}.

All the codes for rule mining algorithms were implemented in C++20.
All the codes for dataset settings, classification models, and correction models were implemented in Cython-0.29.34 and Python-3.10.6.
The experiments were conducted on Ubuntu 22.04.2 LTS with Intel(R) Xeon(R) CPU E5-2667 v4 @ 3.20GHz and 128GB RAM.
The source codes are available at \url{https://doi.org/10.6084/m9.figshare.24305305.v1}.

\subsection{Setup}

%%%%
\subsubsection{Scenarios}
We simulated the following two scenarios.

\paragraph*{Scenario 1. Data Lacking}
We constructed a model with slight accuracy trained on a small dataset.
On a later day, we will have a chance to obtain additional data, albeit for a short period.
We are thinking of carefully collecting limited data to improve training data effectively.
What attributes should we focus on to collect data?

\paragraph*{Scenario 2. Concept Drift}
We constructed a better model trained on enough datasets.
However, the model has become inaccurate because of concept drift.
Now, we are trying to effectively update the model by investigating regions related to concept drift.
Can we summarize the concept drift regions?

%%%%
\subsubsection{Datasets}
We used four datasets Adult~\cite{Adult}, FICO~\cite{FICO}, Magic~\cite{Magic}, and Stability~\cite{Stability}.
Adult is a social dataset to classify individual income exceeding \$50K/year based on census data.
FICO is a finance dataset to classify whether a person has a default risk on a loan through customer information.
Magic is a physical dataset for binary classification of telescope observation patterns.
Stability is a simulated dataset to classify whether an electrical grid system is stable.
We dropped all the instances having a lacking attribute.
A summary of dataset statistics is shown in TABLE \ref{tab:datasets}.

In the data lacking scenario, we split each dataset into training data (TRN) of 100 instances, mining data (MNG) of 500 instances, augmentation data (AUG) of 50\% instances, and test data (TST) of the other instances by the class-wise stratified manner.
Note that, as described later, AUG data were used for sampling instances which were regarded as collected data for model re-training.
In the concept drift scenario, we split each dataset into TRN of 40\% instances, MNG of 40\% instances, and TST of 20\% instances.
The synthetic generation of concept drift is explained later.
For both scenarios, MNG data were discretized by categorical one-hot encoding and numerical quantile cut of four bins.
Items derived from numerical attributes were denoted by ``$c \leq \mathrm{attr}$`` or ``$\mathrm{attr} < c$`` for each cut point $c$.
Moreover, all the experiments were conducted on 10 different dataset splits and summarized their results.

% Brief Summary of Datasets
\begin{table}[t]
\centering
\caption{Summary of the real datasets.}
% \vspace{-1mm}
\label{tab:datasets}
\begin{tabular}{c|r|r|r|r}
Name & \#Instance & \#Positive Instance & \#Attribute & \#Item \\\hline
Adult & 45,222 & 11,208 & 14 & 120 \\\hline
FICO & 10,459 & 5,459 & 23 & 124 \\\hline
Magic & 19,020 & 12,332 & 10 & 60 \\\hline
Stability & 10,000 & 3,620 & 11 & 66 \\\hline
\end{tabular}
\end{table}

%%%%
\subsubsection{Synthetic Generation of Concept Drifts}
Concept drift scenarios were randomly generated for each dataset.
First, we divided the instances into disjoint subsets by a decision tree of depth five over random dummy class labels.
Here, cut points of numerical attributes were previously generated by the quantile cut of four bins.
Second, we extracted 10 subsets each of which satisfied the following two conditions:
(1) The number of instances is between 3\% and 5\% of the original.
(2) Both the proportions of the positive and negative instances are at least 10\% on the subset.
Until we succeeded in strictly extracting such 10 subsets, we repeated the random generation above.
Finally, we randomly selected a class label for each extracted subset.
Then, for each subset, we set the instances of the selected label so that they were less likely to be included in the TRN data.
As a result, there were 10 disjoint regions of concept drift by label shifts between the TRN and the other data.
After sampling TRN data, MNG  and TST data were sampled in a class-wise stratified manner.

%%%%
\subsubsection{Base Classification Models}
We used four types of classification models K-Nearest Neighbors (KN), Logistic Regression (LR), Random Forest (RF), and Gradient Boosting Trees (GB) as base models.
For each pair of a dataset and model type, we trained the model on the TRN data by scikit-learn~\cite{sklearn} with optimizing some hyper-parameters by Optuna~\cite{Optuna} of 100 trials.
We set the metric of the training as log loss.
In addition, we used SMOTE~\cite{SMOTE} algorithm for oversampling to balance the class distribution.
More details of the hyper-parameters are shown in TABLE \ref{tab:base-model}.

% Settings of Base Classification Models
\begin{table}[t]
\centering
\caption{Settings of Base Classification Models.}
% \vspace{-1mm}
\label{tab:base-model}
\begin{tabular}{c|c|c|c|c}
Model & Parameters in scikit-learn\\\hline
KN &
    \begin{tabular}{c}
        $\text{n\_neighbors} \in \{1,2,\ldots,50\}$\\
        $\text{weights} \in \{\text{uniform}, \text{distance}\}$\\
        $\text{metric} \in \{\text{minkowski}, \text{manhattan}\}$ 
    \end{tabular}\\\hline
LR &
    \begin{tabular}{c}
        $\text{C} \in [1e\mathchar`-4,1e4]$ (log scale)\\
        $\text{penalty} \in \{\text{L1}, \text{L2}\}$
    \end{tabular}\\\hline
RF &
    \begin{tabular}{c}
         $\text{max\_depth} \in \{2,3,\ldots,10\}$\\
         $\text{min\_samples\_leaf} \in \{1,2,\ldots,64\}$ 
    \end{tabular}\\\hline
GB &
    \begin{tabular}{c}
         $\text{learning\_rate} \in [1e\mathchar`-4, 1]$ (log scale)\\
         $\text{max\_depth} \in \{2,3,4,5\}$\\
         $\text{min\_samples\_leaf} \in \{1,2,\ldots,64\}$ 
    \end{tabular}\\\hline
\end{tabular}
\end{table}

\begin{figure*}[h]
\centering
\includegraphics[scale=0.265]{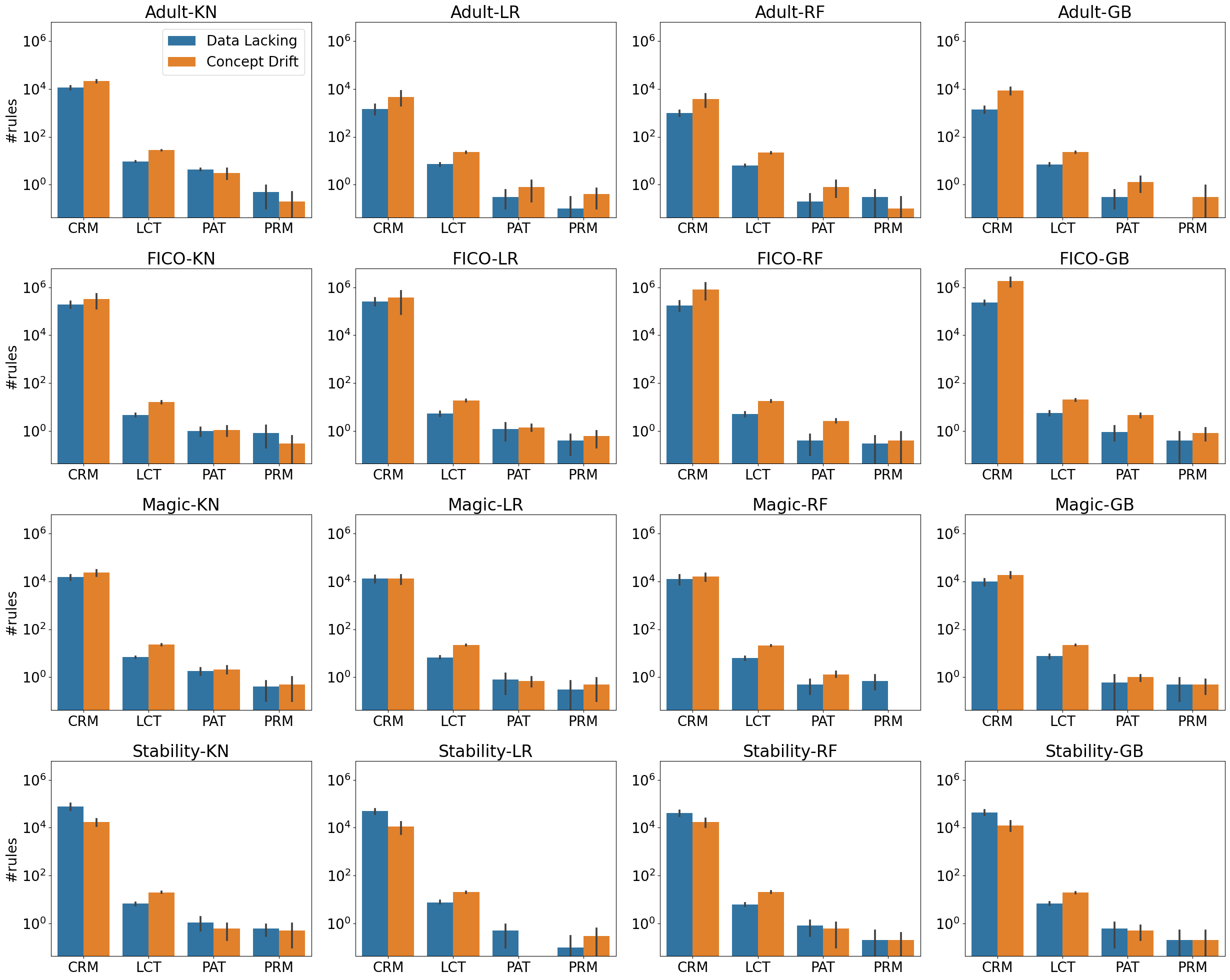}
\vspace{-2mm}
\caption{Number of obtained rules. The black bars are the two-sided 95\% confidence intervals.}
\label{fig:n-rules}
\end{figure*}

\begin{figure*}[h]
\centering
\includegraphics[scale=0.265]{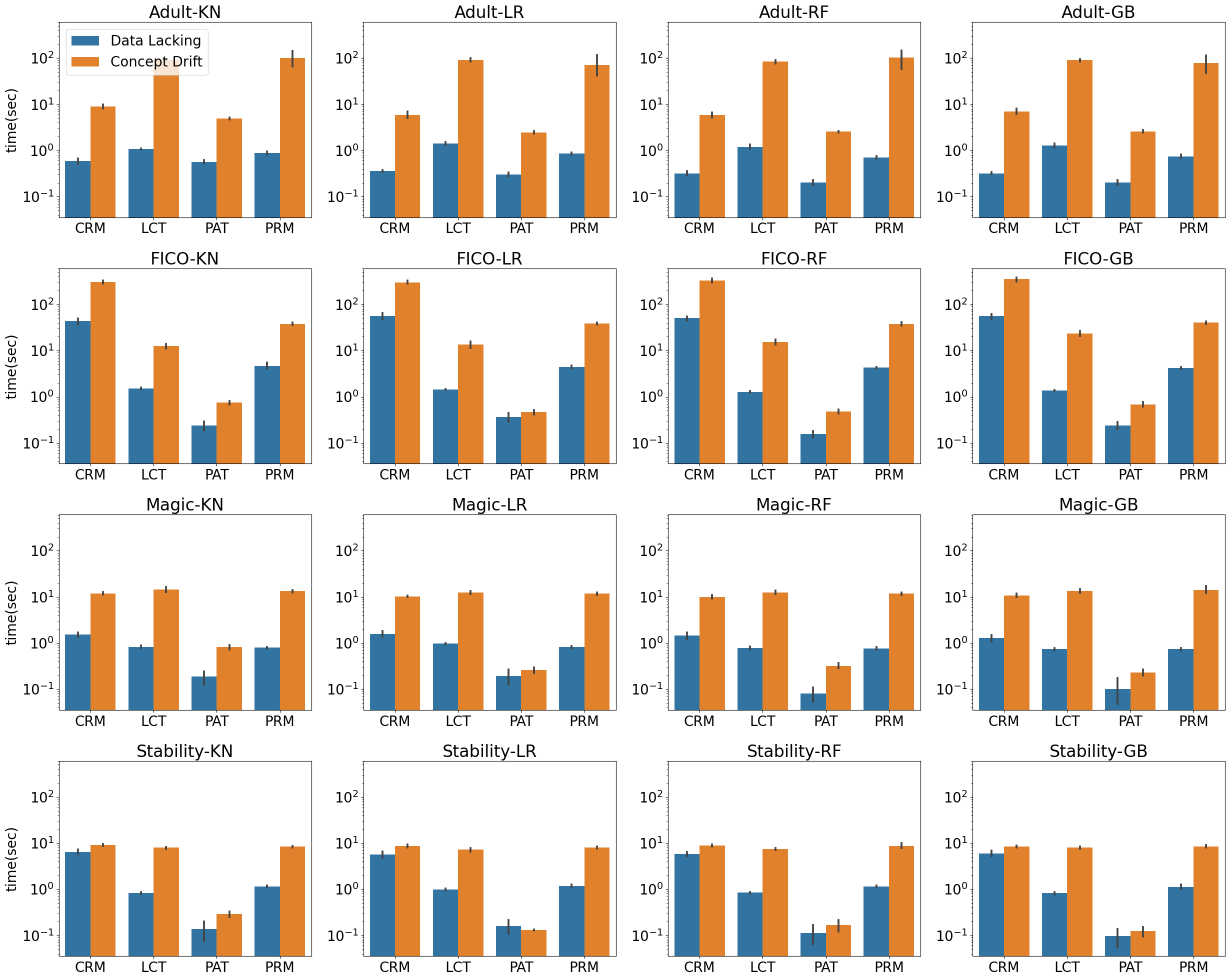}
\vspace{-2mm}
\caption{Number of obtained rules. The black bars are the two-sided 95\% confidence intervals.}
\label{fig:run-time}
\end{figure*}

%%%%
\subsubsection{Competitors}
We compared the proposed method CRMiner with PREMISE~\cite{PREMISE}, local correction tree (LCT)~\cite{LCT}, and patching~\cite{Patching}.
They run on each pair of MNG data and a base model.
MNG data were split into 80\% and 20\% ratios, with 20\% of data being validation data.
TRN data were used for guessing concept drift regions.
As same as training base models, we used SMOTE algorithm for oversampling to balance the class distribution.
All the methods run with the fixed settings throughout all the experiments as follows:
\begin{itemize}
\item \textbf{CRMiner (CRM)} run with the maximum itemset length $K = 5$, the support threshold $\theta = 0.05$, and the confidence thresholds $\lambda = 0.9$, $\lambda_{\mathrm{valid}} = 0.7$, and $\lambda_{\mathrm{drift}} = 0.5$.
In addition, we used the minimal enumeration strategy.
From these settings, we expect that the obtained correction rules have moderate coverage and high accuracy.
\item \textbf{PREMISE (PRM)} run with the maximum itemset length of five.
For each itemset obtained by PREMISE, an optimized correction amount was assigned for we treated them so that they were correction rules.
On the validation and guessing concept drift regions, we set the confidence threshold $\lambda_{\mathrm{valid}} = 0.7$ and $\lambda_{\mathrm{drift}} = 0.5$.
\item \textbf{LCT} was trained with the depth of five.
On the validation, we eliminated leaves of no accuracy improvement.
On guessing concept drift regions, we ignored leaves with even a slight accuracy improvement on TRN data.
\item \textbf{patching (PAT)} consisted of decision tree detecting misclassification and random forest patch.
A decision tree was trained with the depth of five and the labels of ``misclassified`` or not.
A leaf with a ``misclassified`` label leads to a misclassification region.
For each misclassification region, a random forest was trained as a patch without tuning.
The settings above are imitations of the patching literature~\cite{Patching}.
On the validation, we eliminated patches of no accuracy improvement.
On guessing concept drift regions, we ignored misclassification regions with even a slight accuracy improvement on TRN data.
\end{itemize}
\subsection{Direct Correction of Classification Models}
We adapted each method to directly correct the base models and measured accuracy changes.
For both CRM and PRM, we constructed an obvious correction rule set (CSR) having all the mined correction rules.
There was no optimization on CRS to observe averaged behaviors of mined correction rules.
LCT and PAT are already correction models and need no additional process for direct correction.
We summarized the results on 10 different TRN-MNG-(AUG)-TST splits.

Before showing the main results, we check the number of obtained rules for each method as shown in Fig. \ref{fig:n-rules}.
Surprisingly, in almost all the cases, CRM obtained over 100 times the number of rules than the other methods, although CRM aimed to find highly confident rules.
This indicates that the non-enumeration methods may miss a lot of confident rules.
On the other hand, the large outputs of CRM need to be properly reduced in practical human checking situations.
We conducted a reduction of CRM outputs in the experiments of the data collection and the concept drift summarization.

We also check the run times for each method as shown in Fig. \ref{fig:run-time}.
The run times of LCT, PAT, and PRM were relatively longer for the most large dataset Adult.
In fact, the data size is the most significant and empirical factor of their run times.
Especially, PAT can be exactly completed in polynomial time relative to the data size and is exceptionally fast among the competitors.
Looking back at Fig. \ref{fig:n-rules}, CRM seems to take longer run times by obtaining more rules.
This probably indicates a property that CRM appropriately prunes the redundant itemsets and reduces run times on exponentially huge search space.
In particular, CRM was only equal to or 10 times slower than LCT and PRM in run times, despite obtaining over 100 times larger number of rules.

The results of the accuracy improvements on data lacking and concept drift cases are shown in Fig. \ref{fig:lack} and Fig. \ref{fig:drift}, respectively.
We focused on four metrics accuracy (acc), precision (pre), recall (rec), and f1 score (f1).
The grayscale heat maps show the averaged rankings of accuracy changes among four competitors on TST data (darker is better).
The red-blue heat maps show the averaged accuracy changes on TST data (red means improving and blue means worsening).
The rows correspond to the competitors and the columns correspond to the base model types.
The values are compared vertically.

As shown in Fig. \ref{fig:lack}, CRM and LCT were superior to PAT and PRM in almost all the data lacking cases.
On Magic dataset, we can clearly find worsening results of a few percent recall down regardless of the methods.
However, CRM obtained such results only for the RF-type base models.
From the above observations, rule-based correction models should be optimized with clear metrics for validating accuracy improvements, as in CRM and LCT.
Furthermore, by focusing on the results of ranking, CRM tends to be better than LCT.
Therefore, it is better to enumerate correction rules than to obtain limited variations of rules, e.g., by a tree, because data lacking cases can make it difficult to identify appropriate regions where correction is effective.

As shown in Fig. \ref{fig:drift}, CRM and LCT were superior to PAT and PRM in almost all the concept drift cases.
Fortunately, there were no clear worsening results.
Focusing on the results of ranking, LCT tends to be better than CRM.
Therefore, in cases with obvious degradation of model performances such as concept drift, limited disjoint rules are possibly better than enumerated rules with some overlap regions.
Note that we did not optimize the correction models derived from CRM.
Hence, by optimizing correction models, more generalization performances could be achieved via CRM.
It is one of future work to develop a method for efficiently optimizing correction models via CRM.

\subsection{Model Re-training with Improving Training Dataset}
We adapted the obtained rules to collect additional data for re-training the base models.
We assumed that MNG data of 500 instances was the first randomly collected data and additional 500 instances would be carefully collected.
Thus, the new training dataset consisted of TRN, MNG, and additional data of 500 instances sampled from AUG data.
The sampling was conducted as follows.
First, a fully random sampling approach (RND) was used as a baseline method.
Second, for each comparison method, we randomly selected 10 rules obtained by the method.
After that, we randomly sampled 50 instances hit by the rule on AUG data for each selected rule.
If the method had less than 10 rules, the lacking instances were made up by uniform random sampling, i.e., the sampling was close to RND approach.
Here, because CRM generated more rules than other methods, we reduced the target rules by greedily constructing a correction rule list of 16 rules by focusing on accuracy improvement.
We summarized the results on 10 different TRN-MNG-AUG-TST splits.

The results are shown in Fig. \ref{fig:re-train} (the composition of the figure is the same as in Fig. \ref{fig:lack} and \ref{fig:drift}).
CRM aimed to obtain highly accurate rules as mentioned in the setup details.
Thus, the most notable metric is accuracy.
In fact, CRM achieved high accuracy improvements for Adult, Magic, and Stability.
On the other hand, CRM could not achieve high recall improvements.
Therefore, when we focus on a specific metric, it may be useful to adjust mining parameters or limit the direction of corrections.
On FICO dataset, we observed that model performances were more improved when the method was closer to RND approach.
Note that, because PAT and PRM obtained only a few rules, their behaviors were close to that of RND in our settings.
That is to say, uniform data collections are better for FICO dataset.
We are guessing the reason as, because misclassifications were found in every possible region on FICO dataset, and we achieved little accuracy improvements by focusing on a limited part of the misclassification regions.

\begin{figure*}[p]
\centering
\includegraphics[scale=0.23]{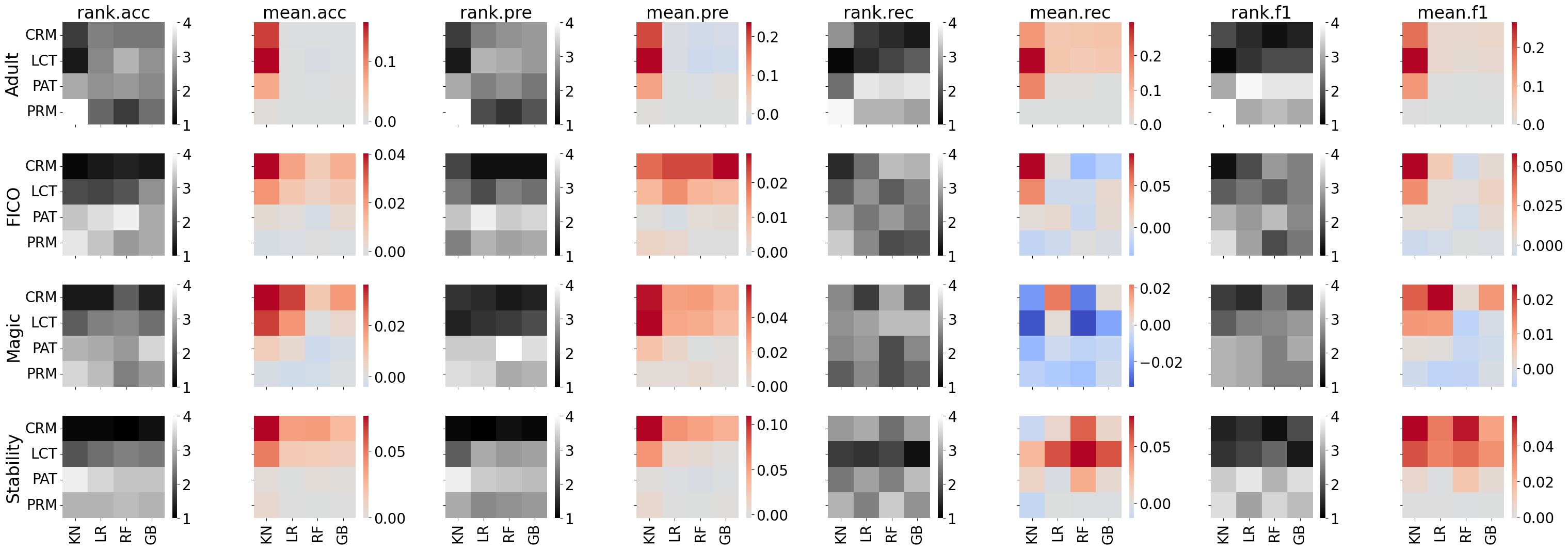}
\vspace{-2mm}
\caption{Summary of accuracy changes by direct model correction in the cases of data lacking.}
\label{fig:lack}
\vspace{4mm}
\centering
\includegraphics[scale=0.23]{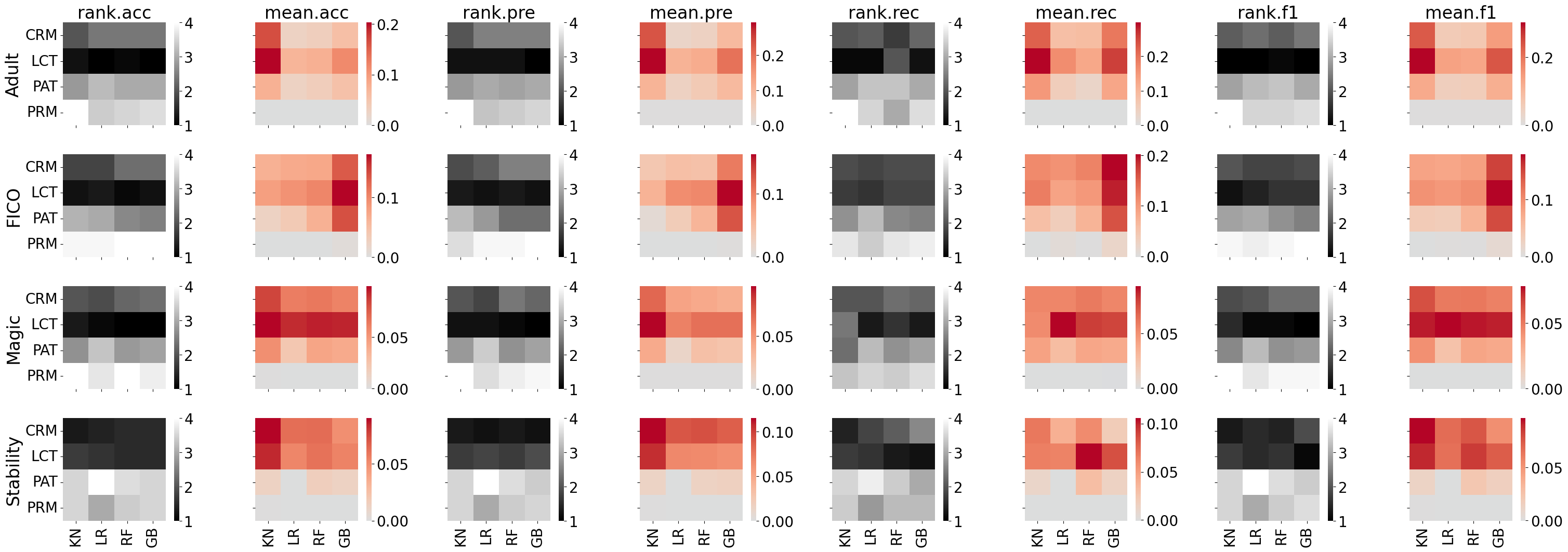}
\vspace{-2mm}
\caption{Summary of accuracy changes by direct model correction in the cases of concept drift.}
\label{fig:drift}
\vspace{4mm}
\centering
\includegraphics[scale=0.325]{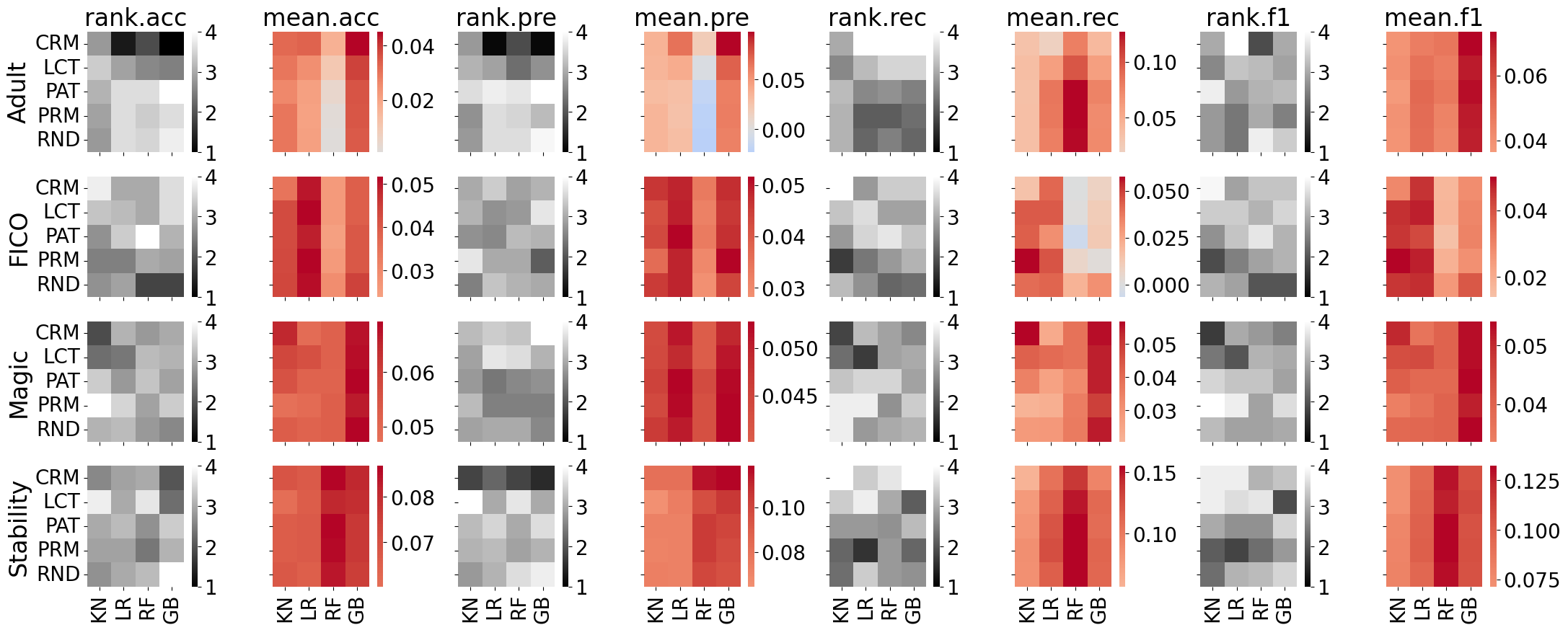}
\caption{Summary of accuracy changes by re-training with additional collected data in the cases of data lacking.}
\label{fig:re-train}
\vspace{5mm}
\end{figure*}
\begin{figure*}[h]
\centering
\includegraphics[scale=0.3]{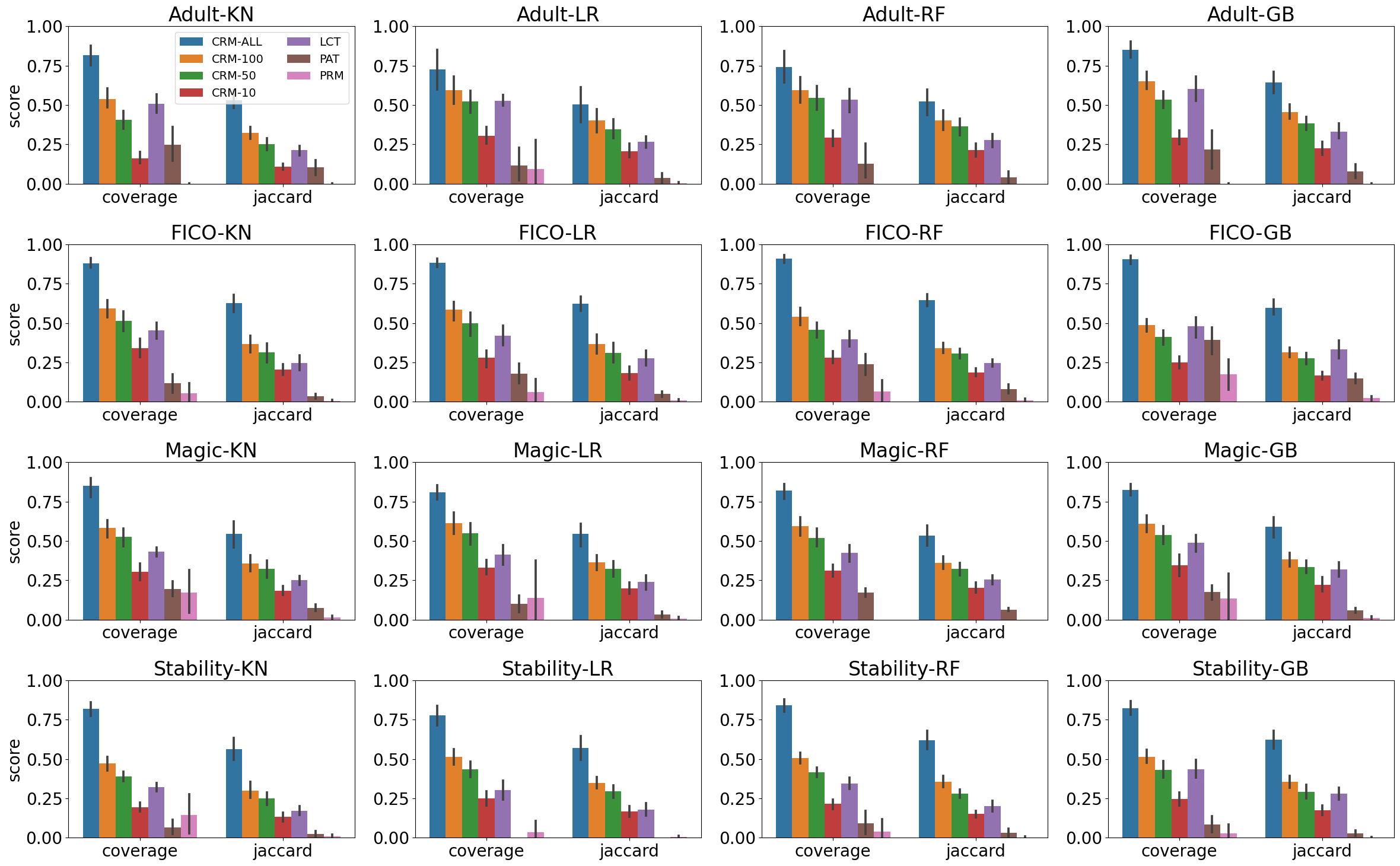}
\caption{Coverage and Jaccard similarity against ground truth concept drift regions. The black bars are the two-sided 95\% confidence intervals.}
\label{fig:coverage}
\end{figure*}
\subsection{Summarization of Concept Drifts}
We analyzed the obtained rules in the concept drift cases.
As described in the setup section, we generated the 10 ground truth rules of concept drift.
We then computed how well each method restored the ground-truth rules in terms of instance subsets hit by rules.
Let $\mathcal{G}$ and $\mathcal{R}$ be the sets of the ground truth rules and the mined rules by a comparison method, respectively.
For each $G \in \mathcal{G}$, we computed the coverage and Jaccard similarity on TST data as follows:
\begin{align*}
\mathrm{coverage}(G,\mathcal{R}) &= \max_{R \in \mathcal{R}} \frac{|D_{\mathrm{TST}}(G) \cap D_{\mathrm{TST}}(R)|}{|D_{\mathrm{TST}}(G)|}\\
\mathrm{jaccard}(G,\mathcal{R}) &= \max_{R \in \mathcal{R}} \frac{|D_{\mathrm{TST}}(G) \cap D_{\mathrm{TST}}(R)|}{|D_{\mathrm{TST}}(G) \cup D_{\mathrm{TST}}(R)|}
\end{align*}
where $D_{\mathrm{TST}}(X)$ is the set of hit instances by a rule $X$ on TST data.
We summarized the averaged coverage and Jaccard similarity for 10 ground truth regions over each of 10 different TRN-MNG-TST splits.

Furthermore, because CRM generated more rules than other methods, we conducted clustering and picked some representative rules.
This helps in practical rule analysis by humans.
The clustering was based on the $k$-modes algorithm which adapts the $k$-means algorithm to categorical variables (in our case, each variable was binary of hit on an instance or not).
In addition, the clustering was adapted in a direction-wise manner (each of positive and negative).
After clustering by $k$-modes, we picked the nearest rule for each centroid.
We then obtained up to $2k$ correction rules as representatives.
We tried three types of the number of clusters $k \in \{10, 50, 100\}$.

The results are shown in Fig. \ref{fig:coverage}.
We denote the original CRM and it with $k$-modes as CRM-ALL and CRM-$k$, respectively.
Surprisingly, CRM-ALL obtained the best and enough results for both the coverage and Jaccard similarity in all the cases.
LCT had relatively good results but not as well as CRM.
Moreover, PAT and PRM completely failed to restore the ground truth rules as shown by Jaccard similarity.
Those results witness both the significance of the enumeration and the poorness of non-enumeration.
By focusing on the results of $k$-modes, although CRM-10 was insufficient, increasing $k$ efficiently improved the results.
Especially, CRM-50 was sufficient for being superior to LCT in almost all cases.
Therefore, clustering is a better approach to summarizing correction rules obtained by CRM into a form that is easier for human checks.
It is one of future work to find more effective clustering or other summarization approaches to reduce the burden of a human being.

%%%%%%%%%%%%%%%%%%%%%%%%%%%%%%
% Conclusion
%%%%%%%%%%%%%%%%%%%%%%%%%%%%%%
\section{Conclusion}

In this study, we defined correction rules that describe inaccurate subpopulations and how to correct them.
We developed an efficient mining algorithm to enumerate correction rules.
Our mining algorithm is an efficient backtrack search using the monotonicity on equivalent lattice defined on frequent itemsets.
We can use output rules to understand inaccurate subpopulations and improve the accuracy of prediction models. 
We also proposed various options and applications such as reducing output rules by minimal enumeration, eliminating noisy rules, constructing correction models, collecting additional training data, and summarizing concept drift.
Our experimental results demonstrated the effectiveness of our approach in terms of improved accuracy and concept drift summarization compared to existing rule-based model correction approaches.

\section*{Acknowledgement}
Satoshi Hara is partially supported by JSPS KAKENHI Grant Number 20K19860.

%%%%%%%%%%%%%%%%%%%%%%%%%%%%%%
% Bibriography
%%%%%%%%%%%%%%%%%%%%%%%%%%%%%%
\bibliographystyle{IEEEtran}
\bibliography{icdm23}

\end{document}